\documentclass {journal}

\setlength {\parindent} {0 pt}
\setlength {\parskip} {1.5 ex plus 0.5 ex minus 0.2 ex}

\usepackage {a4wide}
\usepackage {amssymb}
\usepackage [usenames] {color}

\usepackage {enumerate}
\usepackage {plaatjes}
\usepackage {citesort}

\usepackage [usenames] {color}
\usepackage [left,pagewise] {lineno}

\definecolor {infocolor} {rgb} {0.6,0.6,0.6}


\definecolor {sepia} {rgb} {0.75,0.30,0.15}
\everymath{\color{sepia}}

\usepackage {amsmath}
\usepackage {xspace}

\usepackage[noend,linesnumbered]{algorithm2e}

\usepackage{xspace}

\newcommand {\mathset} [1] {\ensuremath {\mathbb {#1}}}
\newcommand {\R} {\mathset {R}}

\newcommand {\etal} {\textit {et al.}}

\newcommand {\locmin} {\ensuremath {\mathrm {lmin}}}
\newcommand {\locmax} {\ensuremath {\mathrm {lmax}}}

\newcommand {\pdismaxconsub} {\textsc {Planar 2-Disjoint Maximally Connected Subgraphs}\xspace}

\newcommand {\pdmcs} {\textsc {P2-MaxCon}\xspace}
\newcommand {\pdisconsub} {\textsc {Planar 2-Disjoint Connected Subgraphs}\xspace}
\newcommand {\pdcs} {\textsc {P2-Con}\xspace}
\newcommand {\disconsub} {\textsc {2-Disjoint Connected Subgraphs}\xspace}
\newcommand {\dcs} {\textsc {2-Con}\xspace}
\newcommand {\setc} {\textsc {Set Cover}\xspace}

\newcommand {\gp} {\pdmcs}

\newcommand{\marrow}{\marginpar[\hfill$\longrightarrow$]{$\longleftarrow$}}
\newcommand{\remark}[3]{\textcolor{blue}{\textsc{#1 #2:}} \textcolor{red}{\marrow\textsf{#3}}}

\newcommand{\maarten}[2][says]{\remark{Maarten}{#1}{#2}}
\newcommand{\rodrigo}[2][says]{\remark{Rodrigo}{#1}{#2}}
\newcommand{\frank}[2][says]{\remark{Frank}{#1}{#2}}

\def\fk#1{%
 {\color{green}{#1}}%
}

\renewcommand{\remark}[3]{}

\newtheorem {theorem} {Theorem}

\newtheorem {lemma} {Lemma}

\newtheorem {corollary} {Corollary}

\newenvironment {proof}{\textbf {Proof:}}{\hfill \ensuremath {\boxtimes}}

\title{Removing Local Extrema from Imprecise Terrains}

\author
{
  Chris Gray\thanks
  {
Department of Computer Science, TU Braunschweig, Germany, {\tt
gray@ibr.cs.tu-bs.de}
  }
  \and Frank Kammer\thanks
  {
Institut f\"ur Informatik, Universit\"at Augsburg, Germany, {\tt kammer@informatik.uni-augsburg.de}
  }
  \and Maarten L\"offler\thanks
  {
Computer Science Department, University of California, Irvine,
USA, {\tt mloffler@uci.edu}
  }
  \and Rodrigo I. Silveira\thanks
  {
Dept. de Matem\`{a}tica Aplicada II, Universitat
Polit\`{e}cnica de Catalunya, Spain, {\tt
rodrigo.silveira@upc.edu}
  }
}

\begin{document}
\maketitle

\begin{abstract}
  In this paper we consider imprecise terrains, that is, triangulated terrains with a vertical error interval in the vertices. In particular, we study the problem of removing as many local extrema (minima and maxima) as possible from the terrain. We show that removing only minima or only maxima can be done optimally in $O (n \log n)$ time,
  for a terrain with $n$ vertices. Interestingly, however, removing both the minima and maxima simultaneously is NP-hard, and is even hard to approximate within a factor of $O (\log \log n)$ unless $\mathrm{P}= \mathrm{NP}$.
  Moreover, we show that even a simplified version of the problem where vertices can have only two different heights is already NP-hard, a result we obtain by proving hardness of a special case of \disconsub, a problem that has lately received considerable attention from the graph-algorithms community.
\end{abstract}

\tableofcontents

\section{Introduction}

Digital terrain analysis is an important part of geographical information science, with applications in hydrology, geomorphology, visualization, and many other fields~\cite {wg-tapa-98}.
A popular structure for representing terrains is the \emph {triangulated irregular network (TIN)}, also known as \emph {polyhedral terrain}.
In this model, a terrain is represented by a planar triangulation with an additional height associated with each vertex.
If we linearly interpolate the heights of the vertices, we also obtain a height at every other point in the plane, resulting
in a bivariate, piecewise linear and continuous function, defining the
surface of the terrain.
A terrain in this model is also often called a 2.5-dimensional (or 2.5D)
terrain.

\subsection {Imprecision in Terrains}

In computational geometry it is usually assumed that the input data
for any problem is correct and known exactly.
In practice, this is unfortunately not the case.
There are many sources of imprecision, the most prominent of which
is the data acquisition itself.
In terrain modeling, this is particularly relevant, because elevation data is collected by measuring devices that are ultimately error-prone.
Often such devices produce heights with a known error bound or
return a height interval rather than a fixed height value.

In order to handle the imprecision in terrains, we adopt the
model used in~\cite{ge-osp-04,gls-smit-10,ke-opsp-07}, where the
height of each terrain vertex is not precisely known, but only
an interval of possible heights is available.
This results in considerable freedom in the terrain, since the ``real'' terrain is
unknown and any choice of a height for each vertex---as long as
it is within its height interval---leads to a valid \emph{realization}
of the imprecise terrain.
The large number of different realizations of an imprecise
terrain leads naturally to the problem of finding one that is
`best' according to some criterion, or that removes most or many instances of a certain type of unwanted feature (or artifact) from the terrain.

We note that, even though terrain data may contain error also in the $x,y$-coordinates, under this model we  consider imprecision only in the $z$-coordinate.
This simplifying assumption is justified by the fact that error in the $x,y$-coordinates will most likely produce elevation error.
Moreover, often the data provided by commercial terrain data suppliers only reports the elevation error~\cite{ft-ccedem-06}.

In the remainder of this paper, an \emph {imprecise terrain} is a set of $n$ vertical intervals in $\R^3$, together with a triangulation of the vertical projection. Figure~\ref {fig:imprecise-terrain} shows an example.
A \emph {realization} of an imprecise terrain is a triangulated terrain that has the same triangulation in the projection, and exactly one vertex on each interval.
An alternative way to view an imprecise terrain is by connecting the tops of all intervals into a terrain, which we call the \emph {ceiling}, and the bottoms into a second terrain, which we call the \emph {floor}. Then, a realization is a terrain that lives in the space left open between the floor and the ceiling. Figure~\ref {fig:floor-and-ceiling} shows this in the example.

\tweeplaatjes {imprecise-terrain} {floor-and-ceiling} {(a) An example of an imprecise terrain. (b) The same terrain, shown by drawing the floor and the ceiling.}

\subsection {Removing local extrema}

A local minimum (or pit) is a location on a terrain that is surrounded by higher points, or that has no lower neighboring point.
Similarly, a local maximum (or peak) is as a point surrounded by lower points or without higher neighbors. The term \emph{local extrema} will be used to refer to both local minima and local maxima.

When terrains are used for land erosion, landscape evolution, or hydrological studies, it is generally accepted that the
majority of local extrema in the terrain model are spurious, caused by errors in the data or model production.
A terrain model with many pits or peaks does not represent the terrain faithfully, and moreover, in
the case of pits, it can create problems because water accumulates at them, affecting water flow
routing simulations.
For this reason the removal of local minima from terrain models is a standard preprocessing
requirement for many uses of terrain models~\cite{ztz-edpah-06,tsv-adddl-06}.
However, existing preprocessing routines make no attempt to relate the removed
minima to knowledge about the imprecision in the terrain model, possibly
causing major alterations to the data under study.

In this paper we attempt to solve the problems of removing as many local minima, maxima, or extrema as possible by moving the vertices of an
\emph {imprecise} terrain within their allowed height intervals.
The rationale behind this is that if a pit (or peak) can be removed in this
way, it is likely to be an artifact of the data, whereas if it cannot, it
is more certain to be a `real' pit (or peak).
We define the \emph{minimizing-minima}, the
\emph{minimizing-maxima}, and the \emph{mini\-mizing-extrema} problems on imprecise terrains, where we attempt
to find a realization of an imprecise terrain
(by placing the imprecise points within their intervals) that minimizes
the number of \emph{local minima}, \emph{local maxima}, and
\emph{local extrema}, respectively.

It is important to note that a group of $k$ connected vertices at the same height without any lower neighbor is considered to be only \emph{one} local minimum.
This is reasonable from the point of view of the application, and follows the definitions used in previous work~\cite{so-fcdd-07}.
In Section~\ref {sec:degeneracy} we discuss what the implications of this modeling choice are for our results.

Regarding previous work, a lot of research has been devoted to the problem of removing local minima from (precise) terrains, especially in the geographic information science community, but also from more algorithmic points of view;
we only provide a few relevant references here.
Most of the literature assumes a raster (grid) terrain (e.g.~\cite{m-nmg-88,mg-aobat-99,ztz-edpah-06}), and employs methods that are some type of ``pit filling'' technique, which consist in filling in depressions until they disappear (e.g.~\cite{cdp-qafdn-00,m-nmg-88,ztz-edpah-06}).
Some of the few exceptions are the methods in~\cite{cd-bdgt-10,mg-aobat-99,r-pbauca-98}.
A few algorithms have been proposed for triangulated terrains, such as~\cite{ls-ftt-04,aay-ioebu-06}.
The removal of local extrema has also been studied in the context of optimal higher order Delaunay triangulations~\cite{ghk-hodt-02,kkl-grtho-07}.
In particular, Gudmundsson~\etal~\cite{ghk-hodt-02} show that the optimal number of both local minima and local maxima can be removed from first-order Delaunay triangulations in $O(n \log n)$ time.
More related to this paper, Silveira and Van Oostrum~\cite{so-fcdd-07} study moving vertices vertically in order to remove all local minima with a minimum cost, but do not assume bounded intervals.


\subsection {Results}

In Section~\ref {sec:locmin}, we first study the problem of finding a realization of an imprecise terrain that minimizes the number of local minima (or local maxima).
We show that all potential local minima (resp. maxima) of the terrain are independent, that is, whether we remove one does not influence whether or not we can remove another.
Using this property, we then present a relatively simple algorithm that removes local minima (maxima) optimally in $O(n \log n)$ time.

In Section~\ref {sec:extrema}, we turn our attention to removing both minima and maxima simultaneously. In this case we no longer have the independence property, and as a consequence the problem becomes much harder.
In fact, we show that the problem of minimizing the total number of local extrema is NP-hard, even
hard to approximate within a factor $O (\log \log n)$ unless $\mathrm{P}= \mathrm{NP}$.

All results mentioned above assume general position of the input (that is, all the top and bottom ends of the intervals have different heights), and
considers points chosen to be at the same height as a group to be at most
one single local extremum.
In Section~\ref {sec:degeneracy}, we discuss how these assumptions influence the results presented in 
Section~\ref {sec:extrema}.

Finally, in Section~\ref {sec:intermezzo}, we consider a simplified version
of the problem of removing local extrema, where grobally all vertices of a
terrain have only two possible heights. We show that this problem is
already NP-hard, and cannot be approximated within a factor $3/2$.
For this, we prove that the planar version of \disconsub is NP-hard.
The latter problem has received quite some attention recently, and we consider the connection this hardness proof to be of independent interest.

\section {Removing local minima} \label {sec:locmin}



We begin with the problem of finding a realization that has the smallest number of local minima, that is, the minimizing-minima problem.
We propose an efficient algorithm based on the idea of selectively \emph{flooding} parts of the terrain.
The algorithm begins with all
vertices as low as possible, and simulates flooding parts of the terrain.

\paragraph{Algorithm}
Conceptually, we raise all local minima as much as possible, that is, we raise each minimum and its neighbors as we meet them, merging minima as we sweep the terrain bottom-up. The process stops when one of the vertices in a local minimum cannot be raised any further. Also, when there is only one local minimum left and no more higher terrain it could merge into, the process stops.

We sweep a horizontal plane vertically, starting at the lowest interval end and moving upwards in the $z$ direction.
As the plane moves up, it \emph{pulls} some of the vertices with it, whose height is changing together with the plane.
At any moment during the sweep, each vertex is in one of three states:
\begin {enumerate} [(a)]
  \item Moving, if it is currently part of a local minimum, and is moving up together with the sweep plane.
  \item Fixed, at a height lower than the current one.
  \item Unprocessed, if it has not been reached by the sweep plane yet.
\end {enumerate}

As the sweep plane moves vertically up, we distinguish two types of events:
\begin {enumerate} [(i)]
  \item The plane reaches the beginning (lowest end) of the interval of a vertex,
  \item The plane reaches the end (highest end) of the interval of a vertex.
\end {enumerate}

Let $v$ denote the vertex whose interval just began or ended, and let $h$ be the current height of the plane. Note that all fixed vertices are fixed at a height lower than $h$.\footnote{For simplicity we are assuming in this description that all interval heights are different. The removal of this assumption does not pose any problem for the algorithm.}

An event of type (i) can create a number of situations.

If $v$ has a neighbor that is already fixed, then $v$ will never be a local minimum, thus $v$ is fixed at its lowest possible height.
Moreover, if some other neighbor of $v$ is currently part of a local minimum (i.e. is moving), then all the vertices part of that local minimum become fixed at $h$, and automatically stop being a minimum.
This occurs for each neighbor of $v$ that is currently part of a local minimum.

If all neighbors of $v$ are currently unprocessed, then $v$ becomes a new local minimum, and starts to move up together with the plane.

Finally, if no neighbor is fixed but some neighbor is moving, thus is part of a local minimum, then $v$ will join that existing local minimum and also start to move up together with the plane (note that if there is more than one local minimum that is connected to $v$, at this step they all merge into one).

Events of type (ii), when an interval ends, are easier to handle. If $v$ is fixed, nothing occurs.
If $v$ was moving, then it becomes fixed at $h$, and the same occurs to all the vertices of the local minimum that contains $v$.
Thus the whole local minimum becomes fixed, and will be present in the final solution.

\paragraph{Correctness} The correctness of the algorithm can be proved by induction on the steps (i.e. events) of the sweep (associated with exactly $2n$ height values).
Let $h_i$ denote the height of the plane at the $i$th event.
Clearly, for $h=h_1$ the terrain processed has only one local minimum, comprised of the lowest vertex, which is optimal.
Now assume that for $h=h_i$ the solution is optimal.
That is, the number of local minima in the imprecise terrain resulting from cropping the original terrain $T$ by the plane at height $h_i$ (that is, $T \cap (z \leq h_i)$) is minimum.

We analyze the type of event that can take place for $h=h_{i+1}$. Let $v$ be the vertex whose interval is ending or beginning at $h=h_{i+1}$.

If the interval of $v$ is ending, then $v$ is part of a local minimum that will be fixed.
Since this local minimum already existed in the previous step, and that solution was optimal by the inductive hypothesis, the current solution is also optimal.

If the interval of $v$ is just starting, it is only necessary to argue about the optimality of the connected component (induced by the vertices in the cropped terrain) that contains $v$. 
The other connected components are optimal due to the inductive hypothesis, because they have not changed by this event.

Consider first the case in which $v$ is connected to at most one fixed (lower) local minimum
in the current cropped terrain.
Then the connected component that contains $v$ 
will end up consisting of a single local minimum.
Since every connected component has at least one local minimum, this is optimal for the component 
that contains
$v$.

In case that $v$ is connected to more than one lower local minimum, we note that none of them can be removed by connecting them to $v$, because the lowest possible position for $v$ is at $h_{i+1}$, which is higher than all its neighbors (recall that, by construction, fixed local minima are at their highest possible height).
Therefore in this case the number of local minima for the connected component that contains $v$ stays the same, leading again to an optimal solution for that component.
Therefore the current cropped terrain has the minimum possible number of local minima, and the correctness of the algorithm follows.

Finally,  $v$ can become connected to one or more vertices that were moving (thus were part of one or more local minima).
In this case, they all become one single (moving) connected component, connected through $v$, and hence one single local minima.
Again, this is optimal for that component.

\paragraph{Running time}
Sorting the interval ends for the sweep requires $O(n \log n)$ time.
We put the ends in an event queue and mark each event as either of type (i) or type (ii). We remove all events of type (ii) that come after the last event of type (i).

The rest of the steps can be implemented in linear time as follows.
For every vertex, we simply maintain a label that has a value of either \emph {moving}, \emph {fixed} or \emph {unprocessed}.

At an event of type (i) where the sweep plane reaches the bottom of an unprocessed vertex $v$, we first inspect all neighbours of $v$ to determine which subcase we are in. This takes time proportional to the number of neighbours,
and we charge this cost to the edges connecting $v$ to its neighbours. Since
we charge each edge at most twice over the whole algorithm, this takes linear time in total. Now, if no neighbour of $v$ is fixed, we simply set the label of $v$ to \emph {moving} in constant time. If some neighbour of $v$ is fixed, 
we set the label of $v$ to \emph {fixed}, and we start a floodfill (for example using a depth first search) in the graph 
induced by the moving vertices
to find all vertices connected to $v$ that are currently set to \emph {moving}; we set them to \emph {fixed} as well, and we set their height to the current height of the sweep plane. This takes time proportional to the number of vertices that are being fixed plus the number of edges connecting these vertices to other vertices. Since each vertex gets fixed only once, this also amounts to linear work in total.

Events of type (ii) are handled similarly. If $v$ was moving, we set its label to \emph {fixed} and also start the floodfill in the same way.

When all events in the queue have been processed, we finally set the remaining moving vertices to fixed as well.

Note that by multiplying all interval ends with $-1$, we can solve also the
minimizing-maxima problem.

\begin {theorem}
The minimizing-minima (or minimizing-maxima) problem in an imprecise terrain with $n$ vertices can be solved in $O(n \log n)$ time.
\end{theorem}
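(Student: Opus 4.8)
The plan is to establish the theorem by verifying that the flooding algorithm described above is both correct and runs within the claimed time bound; the paper has in fact already done most of this work in the preceding paragraphs, so the proof largely consists of assembling those pieces into a single statement. First I would recall the reduction between the two variants: since negating all interval endpoints turns every local minimum into a local maximum and vice versa, it suffices to treat the minimizing-minima problem, and the minimizing-maxima bound follows immediately by running the same algorithm on the reflected instance.

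For correctness, I would invoke the inductive argument already given: sweeping a horizontal plane upward through the $2n$ interval endpoints $h_1 < h_2 < \dots < h_{2n}$, one maintains the invariant that after processing event $i$ the cropped terrain $T \cap (z \le h_i)$ has been assigned heights realizing the minimum possible number of local minima. The base case $h = h_1$ is a single vertex, hence one local minimum, which is trivially optimal. For the inductive step one distinguishes a type-(ii) event (an interval ends, so a moving local minimum becomes fixed — the component already contained a local minimum, so optimality is preserved) from a type-(i) event (an interval begins at $v$), and within the latter the cases: $v$ has a fixed neighbour (it can never be a minimum and is fixed low, and any moving components adjacent to $v$ get fixed without changing their count), $v$ has only unprocessed neighbours (a new minimum is born, unavoidable since the component is otherwise empty), or $v$ is adjacent only to moving components (they all merge through $v$ into a single moving minimum — still one per component). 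The key point, stated above, is that merging several already-fixed lower minima through $v$ cannot eliminate any of them, because $v$'s lowest feasible height $h_{i+1}$ exceeds all of theirs, so no component's count ever drops below the forced lower bound of one minimum per connected component. Since only the component containing $v$ is affected by the event, all other components remain optimal by the hypothesis, and optimality of the whole follows.

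For the running time, I would cite the implementation analysis: the initial sort of the $2n$ endpoints costs $O(n \log n)$, after which one discards every type-(ii) event occurring after the last type-(i) event. Each remaining event is handled by maintaining a constant-size label (\emph{moving}, \emph{fixed}, or \emph{unprocessed}) per vertex; inspecting the neighbourhood of a newly reached vertex is charged to its incident edges (at most twice each over the whole run), and the floodfill that fixes a moving component is charged to the vertices it fixes (each fixed exactly once) and their incident edges, so all post-sort work is $O(n + m) = O(n)$ by planarity. Hence the total is $O(n \log n)$, which dominates, and the theorem follows.

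The main obstacle, and the part I would be most careful about in a fully rigorous write-up, is the correctness invariant in the merging cases: one must argue not merely that the algorithm's count for the component containing $v$ is some valid number, but that it actually matches the optimum over \emph{all} realizations of the cropped imprecise terrain — in particular that nothing is gained by leaving some already-processed vertex lower than the algorithm placed it, which is exactly why the policy of raising fixed minima as high as possible (up to the first blocking interval top) is the right greedy choice. Everything else — the reflection trick and the amortized linear-time bookkeeping — is routine.
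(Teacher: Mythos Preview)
Your proposal is correct and follows essentially the same approach as the paper: the paper proves the theorem by exactly the algorithm, correctness-by-induction-on-sweep-events, and amortized running-time analysis that you summarize, together with the same reflection trick for the maxima variant. Your write-up is in fact a faithful condensation of the paper's \emph{Algorithm}, \emph{Correctness}, and \emph{Running time} paragraphs, with the only addition being the explicit invocation of planarity to get $O(n+m)=O(n)$, which the paper leaves implicit.
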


It is interesting to note that when a group
of $k$ connected vertices at the same height without any lower
neighbors is regarded as $k$ different local minima, the problem can be proved NP-hard.
More details on this are given in Section~\ref {sec:degeneracy}.

\section {Removing all local extrema}
\label{sec:extrema}
We now move on to the problem of removing 
all local extrema at the same time.
Although the algorithm in the previous section works for both removing minima and removing maxima, it is not possible to use both height assignments simultaneously.
We will show in the next section that we can still use the algorithm twice to narrow down the problem, without changing the value of the solution.
Unfortunately, such an approach does not help much 
to find an optimal realization minimizing the number of maxima. In Section~\ref{sec:ReductionSetCover}  we give a proof that shows that minimizing-extrema is NP-hard to approximate within a factor of $O (\log \log n)$, using a reduction from \setc.

\subsection {Canonical form of an imprecise terrain}

Recall that the floor $F$ is the realization formed by all lower endpoints of the imprecise vertices, and the ceiling $C$ is the realization formed by all upper endpoints, as shown in Figure~\ref {fig:floor-and-ceiling}.

Given two realizations $X$ and $Y$ of the same imprecise terrain, we use the notation $(X,Y)$ to refer to the imprecise terrain truncated by $X$ and $Y$: the bottom interval ends are taken from the heights in $X$, and the top interval ends  from the heights in $Y$ (we assume here that $X$ is never above $Y$).

 We are searching for a surface between the
floor and the ceiling that optimizes the number of local extrema.
We will run the algorithm in Section \ref{sec:locmin} on $(F,C)$ to remove the local minima, 
and call the result $F'$, and run it again on $(F,C)$ to remove local maxima and call the result $C'$.
We call the imprecise terrain $(F', C')$ the \emph {canonical form} of $(F, C)$.

\begin {lemma} \label {lem:proek}
The imprecise terrain induced by $(F', C')$ is still a valid imprecise terrain which has the same optimal solution for removing local extrema as the original terrain $(F, C)$.
\end {lemma}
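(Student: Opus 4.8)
The plan is to establish two things: first, that $(F', C')$ is a valid imprecise terrain, i.e. that $F'$ never lies above $C'$; and second, that the set of realizations of $(F', C')$ still contains an optimal solution (for minimizing local extrema) of $(F, C)$, and conversely every realization of $(F', C')$ is a realization of $(F, C)$, so the optima coincide. The second containment is immediate since raising the floor and lowering the ceiling only shrinks the space of admissible surfaces, so any realization of $(F', C')$ is a realization of $(F, C)$ and thus $\mathrm{OPT}(F',C') \ge \mathrm{OPT}(F,C)$. The work is therefore in the reverse inequality together with validity.

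First I would argue validity. Let $v$ be a vertex. In the minimizing-minima run on $(F,C)$, the vertex $v$ is either fixed at its lowest position (height $F(v)$) or is raised to some height $F'(v)$ at which point the interval of some vertex in its moving component ends; in either case $F'(v)$ is at most $C(v)$, and in fact at most the minimum of $C$ over the merged component at the moment of fixing. Symmetrically $C'(v) \ge F(v)$. To get $F'(v) \le C'(v)$ I would compare the two sweeps directly: intuitively, the minimizing-minima sweep never raises a vertex above the level at which its ceiling-run counterpart would have been fixed from above, because the obstruction that stops the flood from below (an interval top, i.e. a point of $C$) is exactly the kind of feature that the top-down sweep is lowering toward. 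I expect this to require a short monotonicity argument comparing the heights assigned by the two runs level by level; this is the first place where care is needed.

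The heart of the lemma is showing $\mathrm{OPT}(F,C) \ge \mathrm{OPT}(F',C')$, i.e. that passing to the canonical form does not destroy any optimal realization. Fix an optimal realization $S$ of $(F,C)$. I would show that $S$ can be modified into a realization $S'$ with $F' \le S' \le C'$ and no more local extrema. The key structural fact is the \emph{independence} of potential local minima proved in Section~\ref{sec:locmin}: the minimizing-minima algorithm shows that for each vertex $v$, whether or not $v$ can be made a non-minimum is decided locally and independently of the other vertices. Concretely, $F'$ is the \emph{pointwise highest} floor that still admits the minimum number of local minima, and moreover — and this is what I would extract from the algorithm's correctness proof — for any realization $S$ and any vertex $v$ with $S(v) < F'(v)$, raising $v$ up to $F'(v)$ (and, if needed, carrying along the moving component exactly as the algorithm does) does not create a new local minimum, nor a new local maximum, since $v$ is being pushed \emph{up} toward its neighbours and its neighbours are not being lowered. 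Dually, clamping $S$ down to $C'$ from above removes no constraints and creates no new extremum. Doing both yields the desired $S'$.

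The main obstacle I anticipate is the interaction between the two clamps: raising a vertex to meet $F'$ could in principle push it above $C'$, or the ``carry the moving component'' operation from the minima side could conflict with the symmetric operation from the maxima side, so that one cannot simply compose the two modifications. I would handle this by first proving the validity claim $F' \le C'$ (so the target region is nonempty), and then applying the min-clamp and max-clamp in a fixed order, checking that the first clamp leaves the relevant ceiling/floor inequalities for the second clamp intact — which again reduces to the same monotonicity statement, namely that the algorithm's output depends monotonically on its input floor/ceiling. Once that monotonicity lemma is in hand, both validity and the optimality-preservation follow by routine case analysis mirroring the correctness proof already given for the minimizing-minima algorithm.
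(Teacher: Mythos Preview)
Your proposal is correct and follows the same two–part structure as the paper: first show $F'\le C'$ (validity), then show an optimal realization $T^*$ of $(F,C)$ can be pushed into $[F',C']$ without increasing the number of extrema.

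The main difference is in economy. For validity, the paper does not set up a level-by-level monotonicity comparison of the two sweeps; it observes directly that the stopping condition of the minima sweep at $v$ is ``the rising plateau touches a point of $C$,'' and the maxima sweep lowering $v$ can certainly not pass below that same point of $C$ (and symmetrically), so the two plateaus never cross. For optimality preservation, the paper likewise avoids an auxiliary monotonicity lemma: it simply takes a local minimum of $T^*$ lying below $F'$ and lifts its plateau up to $F'$, noting that by construction of $F'$ this lift never touches the ceiling $C$, so it cannot create a maximum and (being an upward move of a minimum) cannot create a minimum; the symmetric statement handles maxima above $C'$. Your worry about the interaction of the two clamps is then dissolved by validity alone: once $F'\le C'$ is established, lifting to $F'$ cannot overshoot $C'$, and lowering to $C'$ cannot undershoot $F'$, so the two operations compose without conflict.

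In short, your plan would work, but the monotonicity-of-the-algorithm lemma you propose is more machinery than the paper needs; the direct ``the sweep stops exactly when it hits the opposite surface'' observation does the job for both halves.
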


\begin {proof}
  We need to show two things. To show that $(F', C')$ is still a valid imprecise terrain, we need that the height of any vertex in $F'$ is at least the height of that vertex in $C'$.
  If there exists a height $h$ such that the entire floor lies below $h$ and the entire ceiling lies above $h$, then neither of them will ever raise/lower beyond $h$, because of the stop condition when there are no new interval events anymore.
  Otherwise, if a vertex $v$ does not rise higher this is because its plateau hits a point of the ceiling; clearly a plateau lowering $v$ will never move past this point.
  In both cases, a plateau rising the floor and one lowering the ceiling of the same vertex never cross each other.

  To show that $(F', C')$ has the same optimal solution as $(F, C)$, we need to show that there exists an optimal terrain $T^*$ between $F$ and $C$ that in fact also lies between $F'$ and $C'$.
  This is true because if a terrain would have a local minimum below $F'$, we could freely lift it together with its neighbors until it coincides with $F'$. By the construction of $F'$, we never hit the ceiling during this process, so we never increase the number of minima or maxima. The converse is true for 
   local maxima above $C'$.
\end {proof}

Lemma~\ref {lem:proek} implies that we can run the algorithm of Section~\ref {sec:locmin} as a preprocessing step, while still allowing a solution as good as in the original problem.
Furthermore, the canonical terrain has more structure than the original one. Every remaining local minimum of the floor touches the ceiling, and every remaining local maximum of the ceiling touches the floor. We can show the following:

\begin {lemma}
  The total number of extrema in the optimal solution $T^*$ is never greater
than the number of local maxima of the floor $F'$ + the number of local
minima of the ceiling $C'$.
\end {lemma}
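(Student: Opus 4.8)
The plan is to construct an explicit realization $T$ lying between $F'$ and $C'$ whose number of local extrema is at most the number of local maxima of $F'$ plus the number of local minima of $C'$; since $T^*$ is optimal, this bounds the number of extrema of $T^*$ from above. The natural candidate is to take $T$ to be essentially the floor $F'$ itself, and to show that all of its local minima can be "charged away" against features we are already counting. Recall from Lemma~\ref{lem:proek} and the discussion following it that the canonical form has the key structural property: every remaining local minimum of $F'$ touches the ceiling $C'$, and every remaining local maximum of $C'$ touches the floor $F'$.

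First I would set $T = F'$ and ask which vertices of $T$ are local minima. By the canonical-form property, each such local minimum $m$ has the property that the corresponding vertex of $C'$ is at the same height — i.e.\ the interval at $m$ has collapsed to a point. The idea is then to argue that at such a collapsed vertex, a local minimum of $F'$ is simultaneously forced to be a local minimum of $C'$ (or that we can locally perturb $T$ upward into the pinched region to trade it for something already counted). Concretely, if the plateau of the local minimum $m$ in $F'$ hits $C'$ at a point where $C'$ is itself locally minimal, then $m$ is counted by "local minima of $C'$"; if instead $C'$ is not locally minimal there, then near $m$ the ceiling descends on one side, but the floor $F'$ at $m$ equals the ceiling and cannot be below it, which forces a local maximum of $C'$ to sit exactly at $m$ as well — and by the canonical-form property that maximum touches $F'$, so it is one of the local maxima of $F'$ we are already counting. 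Either way, every local minimum of $T=F'$ is injectively charged to either a local minimum of $C'$ or a local maximum of $F'$.

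The remaining local extrema of $T=F'$ to account for are its local maxima, which are literally "local maxima of the floor $F'$" and hence counted directly. I would also need to check the charging is injective — that no local minimum of $C'$ and no local maximum of $F'$ is used to pay for two distinct local minima of $F'$ — which follows because each collapsed vertex is a single vertex and the charge is defined vertex-by-vertex. Summing: $\#\{\text{extrema of }T\} = \#\{\text{local maxima of }F'\} + \#\{\text{local minima of }F'\} \le \#\{\text{local maxima of }F'\} + \#\{\text{local minima of }C'\}$, using the charging to replace each floor-minimum by a distinct ceiling-minimum or floor-maximum. Since $T^*$ is optimal, its number of extrema is at most that of $T$, giving the claim.

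The main obstacle I anticipate is making the local case analysis at a collapsed (pinched) vertex fully rigorous, especially handling plateaus: a local minimum of $F'$ may be a connected group of equal-height vertices, and I must argue that the whole plateau meets $C'$ and that the charge can be made consistently on a per-plateau basis without double-counting, while also being careful about degenerate configurations where several such features coincide. The interplay with the "group of equal-height vertices counts as one extremum" convention is where the bookkeeping gets delicate, and I would want to phrase the charging argument at the level of connected plateaus of extrema rather than individual vertices to keep it clean.
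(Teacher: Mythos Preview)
Your overall strategy---exhibit an explicit realization and bound its extrema---is exactly what the paper does, but you pick the dual witness: the paper sets $T=C'$, you set $T=F'$. With $T=C'$ the paper gets $\locmin(T)=\locmin(C')$ for free and then observes that every local maximum of $C'$ touches a distinct local maximum of $F'$ (this being the stopping condition of the ceiling-lowering sweep), so $\locmax(C')\le\locmax(F')$ and the bound follows in one line. Your version needs the mirror fact $\locmin(F')\le\locmin(C')$.

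Your first case is precisely this mirror fact, and it \emph{always} holds---your second case is vacuous. At the collapsed vertex $v$ in the floor-minimum plateau $P$ we have $F'(v)=C'(v)$; every neighbour $w$ of $v$ satisfies $C'(w)\ge F'(w)\ge F'(v)=C'(v)$ (the middle inequality because $P$ is a local minimum of $F'$), so $v$ has no strictly lower neighbour in $C'$. Moreover the equal-height $C'$-plateau through $v$ cannot leave $P$, since every vertex adjacent to $P$ is strictly higher in $F'$ and hence in $C'$; thus that plateau is a genuine local minimum of $C'$, and distinct $F'$-minima map to distinct $C'$-minima.

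The actual gap in your write-up is that you do not establish this, and your handling of the second case would be fatal if that case ever occurred: charging a floor-minimum to a floor-maximum double-counts, because the floor-maxima are already fully spent paying for $\locmax(T)=\locmax(F')$. With such a charge your displayed inequality $\locmin(F')\le\locmin(C')$ simply does not follow from what you proved. Drop the second case, prove instead (as above) that the collapsed vertex is always part of a local minimum of $C'$, and your argument goes through and is exactly the dual of the paper's.
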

\begin {proof}
  Consider the solution $T = C'$.
  Any local maximum of $T$ must touch a unique local maximum of $F'$ because when lowering the maxima of $C$ this was exactly the condition on which we stopped. Therefore, the number of local maxima of $T$ is smaller than the number of local maxima of $F'$.
  Thus, the number of local extrema in $T$ is at most the number of local maxima of $F'$ + the number of local minima of $C'$ (which is $T$ itself).
  Clearly, the number of local extrema in the optimal solution $T^*$ can only be even smaller.
\end {proof}

If we denote by $\locmin(T)$ the number of local minima in a terrain $T$ and by $\locmax(T)$ the number of local maxima in $T$, and we denote by $T^*$ the optimal solution of our problem $(F, C)$, then we can summarize these observations as follows:

\begin {equation} \label {eq:gap}
  \locmin(F') + \locmax(C') \leq \locmin(T^*) + \locmax (T^*) \leq \locmin (C') + \locmax (F')
\end {equation}

This formula gives a bound on the values of a particular instance.
In theory, the gap may still be arbitrarily large.
For example, consider an instance where the floor is more or less flat except for a number of ``stalagmites'' that reach all the way to the ceiling, and the ceiling is more or less flat except for a number of ``stalactites'' that reach all the way to the floor. Figure~\ref {fig:spikes2} show such a situation.
\maarten {I liked the old spikes figure a little better than this one, but I don't mind too much either way. (Which one is "correct" depends on whether we look at the surface as being smooth or having a sharp angle at the base of the cones. I was originally going for the smooth look.)}
In this case, the number of maxima of the floor and minima of the ceiling is large, while the floor has only a single minimum and the ceiling has only a single maximum, and the preprocessing step will not make a difference.
We see in the next section that this makes the problem very hard to solve.

\eenplaatje {spikes2} {An imprecise terrain that has many maxima on the floor and many minima on the ceiling.}

On the other hand, such terrains seem unlikely to appear in real applications. It may be likely that in practice, the gap in Equation~\ref {eq:gap} is quite small. Under which properties of terrains this is the case remains an interesting open question.

  \subsection {Hardness of approximation} \label {sec:approx}
\label{sec:ReductionSetCover}

    In this section we show
    by a direct reduction from the \emph{\setc} problem
    that we cannot approximate the number of local extrema on $n$-vertex graphs
    within any factor better than $O (\log \log n)$ unless $\mathrm{P}= \mathrm{NP}$.

    Given a tuple $(U,C)$, where $U$ is a finite set called
    \emph{universe} and $C$
    is a collection of subsets of $U$ with $\bigcup_{S\in C}S=U$,
    a \emph{set cover} for $(U,C)$ is a collection $C' \subseteq C$ such that
    the union of all sets in $C'$ is equal to $U$. The \emph{size} of $C'$
    is its cardinality.
    The \setc problem is to find a set cover of minimal size.

%

Let $(U,C)$ be an instance of the \setc problem.
We start by defining a graph $G$ with colored vertices.
We then construct a terrain by embedding $G$ in the plane, and triangulating its faces with more than 3 incident vertices.
Finally, we assign heights to the terrain vertices, where the height
of every vertex depends on its color.

    In the remainder, we will use the terms \emph {west} and \emph {east} to refer to the negative and positive $x$-direction, \emph {south} and \emph {north} to refer to the negative and positive $y$-direction, and \emph {down} and \emph {up} to refer to the negative and positive $z$-direction.
    \maarten {Added this sentence here, just before the first usage, because we don't really have a "notation" or "termilonogy" section.}

We begin by describing the northern edge of the constructed graph, which consists in a \emph{north gadget} depicted in Figure~\ref{fig:top}. For each item
$x$ of the universe $U$, we introduce $|U|+3$ red vertices
with a blue vertex between each pair of red vertices. All blue vertices are connected to another
blue vertex $v^{\mathrm{min}}$ at the north of the construction.
Each vertex $v\neq v^{\mathrm{min}}$ in the north gadget
is the beginning of a path---that we call {\em southward path} (indicated in the figures by dashed arrows).
\frank{Here we define southward path?}
\rodrigo{Only partially, right? Because new paths start at gadgets too.}
\frank{Ok, you are right. But lets say here that this is a southward path and
let us say later that this is also a southward path. Of course, more
precisely is to say that, if there is a $v$ without a neighbor that is north to $v$ and
higher than $v$, then $v$ is the starting point of a southward path.
Moreover, all southward path end in a different vertex in the south gadget.}
\rodrigo{OK, we can leave it as it is now, I think that with the new pictures it is clearer anyway}
Moreover, southward paths are marked as either \emph{covered} or \emph{uncovered}.
At this stage, all southward paths starting with a red vertex are considered uncovered, and the ones starting with a blue vertex are covered. 
\maarten [asks] {Why are the arrows in Figure~\ref {fig:top} that connect to red vertices not dashed? Do we consider them to be southward paths as well, or not? (The text suggests yes.) Also, it looks like by "down" do we mean down in the $y$-direction, not down in the $z$-direction. If so, maybe we should say that. Or maybe use "south" as in the other paper, just since it can be confusing when we talk about heights and minima and such...}
\frank{South is a good idea; can you change it? I try to make the lines more dashed, the
problem is that the lines are soo short, i.e., I probable make the lines a
bit longer.}
\maarten {Ok, I did a global search for the words "top", "up", "bottom", "down", "left" and "right" and replaced them with north, south, east and west were appropriate. Are there any other words that we should change?}

\eenplaatje {top} {North gadget.}

The construction continues by adding one \emph{row-gadget} for each set $S \in C$ (see Figure~\ref{fig:row}
for a schematic representation).
Each row-gadget extends the southward paths
southwards and consists of a row of vertices that we call the {\em decision row}. The westernmost and
easternmost vertex of each decision row are, in fact, the same---the
edges connected to these vertices meet in the space above the northern edge of
the currently-constructed graph.
Every decision row consists of white vertices that must be assigned
a color.
To achieve the behavior needed for the reduction, we need to ensure that the colors assigned to the white vertices alternate
between blue and non-blue (red or yellow).
In order to do that, we place inverter gadgets
as shown in Figure~\ref{fig:gadget-a}
between
every pair of white vertices in a decision row.
Additionally, each row-gadget contains one yellow vertex $y_S$ and
several subgadgets that we describe next.

\eenplaatje {row} {Row-gadget for a set $S$ of $C$. Each gray box contains a subgadget.}

As shown in Figure~\ref{fig:row}, the subgadgets always have a white vertex above them.
Depending on the situation of that white vertex, we distinguish two different kinds of subgadgets.

If the white vertex above a subgadget is part of an uncovered southward path
that was introduced for an item $x \in S$, we  use
the subgadget of Figure~\ref{fig:gadget-c}.
The southward path going through the white vertex above such subgadget either
continues its way by the west or east white vertex in the subgadget.
 The
other white vertex is the first vertex of a new southward path for item $x$
 so that we say that both southward paths leaving the subgadget of
 Figure~\ref{fig:gadget-c} are introduced for item $x$.
We mark the west southward path as covered, and the
east southward path as uncovered.
%
As we show later, a
southward path for an item $x$ starting in the north gadget can use the west southward path if and only if
the second west, the forth west, etc.\ vertex of a decision row is not blue. But then
the westernmost, the third west, etc.\ vertex must be blue and the only yellow vertex $y_S$ of the row gadget of $S$
is connected to only two blue vertices, i.e., we have a local maximum at that vertex.
Roughly speaking, this local maximum in the row gadget allows us to mark as covered all southward paths of items $x\in S$.

Otherwise, we use
the subgadget of Figure~\ref{fig:gadget-b}.
A southward path exits this subgadget 
marked as
covered if and only if it
entered this subgadget marked covered.



\vierplaatjes {gadget-a} {gadget-c} {gadget-b} {gadget-d} {(a) Inverter gadget consisting of $|U|+3$ blue and yellow vertices.
(b) and (c): Subgadget part of a gadget for a set $S$ in $C$. (d) The heights of
the vertices in Figures (a)-(c).}

The construction ends with a \emph{south gadget}, which is more or less symmetric to the north gadget, see Figure~\ref{fig:bottom}. The lowest
vertex $v^{\mathrm{max}}$ is red.
The color of the vertices with the two colors in Figure~\ref{fig:bottom} is decided with the following rule.
If such a vertex is the end of a path that is marked uncovered,
then it is colored yellow.
Otherwise, it is colored red.

\eenplaatje {bottom} {South gadget.  A two-colored vertex is colored either red or yellow depending on whether it is the endpoint of a path marked covered or uncovered.}

Let $G$ be the graph obtained, with a straight-line embedding $\varphi$.
The heights are assigned to vertices as follows.
For simplicity, we use colors to refer to vertices with the same (imprecise) height.
Vertices colored red
have height
$5$, yellow vertices
have height $3$,
blue vertices have height $1$, and
white vertices have a height in the range $[1, 5]$.
To
triangulate $G$, we add a vertex $v_F$ of height $2$ into each face
$F$ of $\varphi$ and connect $v_F$ to all
vertices adjacent to $F$ in $\varphi$. 
Let $V'$ be the set of vertices added
during the triangulation.
It is easy to verify that all faces of $G$ have at least one blue and one yellow/red vertex, 
thus the vertices in $V'$, at height $2$, cannot be local extrema.

To see that the size of our reduction is polynomial, we must show that the
number $n$ of vertices of $G$ is polynomial in $|C|$ and $|U|$ since
this graph is planar and thus $|V'|$ is linear in $n$.
\begin {lemma}
  The number of vertices in $G$ is $n = O (|C|^2|U|^3)$.
\end {lemma}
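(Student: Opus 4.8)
The plan is to count the vertices contributed by each type of gadget and the southward paths, and then sum them up, showing that the dominant term is $O(|C|^2|U|^3)$.

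First I would count the vertices in the north gadget. For each of the $|U|$ items we introduce $|U|+3$ red vertices plus a blue vertex between each consecutive pair, so the north gadget has $O(|U|^2)$ vertices, plus the single vertex $v^{\mathrm{min}}$. Each such vertex (except $v^{\mathrm{min}}$) starts a southward path, so the number of southward paths created in the north gadget is $O(|U|^2)$. Symmetrically, the south gadget contributes $O(|U|^2)$ vertices.

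Next I would count the contribution of the $|C|$ row-gadgets. The key observation is that the width of a decision row is governed by the number of southward paths passing through it. Each subgadget of type Figure~\ref{fig:gadget-c} splits one uncovered path for an item $x\in S$ into two paths, so each row-gadget for a set $S$ introduces at most $|U|$ new southward paths (one per item it covers), which over all $|C|$ rows adds at most $O(|C||U|)$ paths; together with the $O(|U|^2)$ paths from the north gadget, the total number of southward paths — and hence the number of white vertices in any single decision row — is $O(|U|^2 + |C||U|) = O(|C||U|^2)$ (using $|C|\geq 1$, or noting that if $|C|$ is small the $|U|^2$ term dominates, and $|U|\le$ something polynomial in the instance size anyway; in any case it is $O(|C||U|^2)$ up to the crude bound we need). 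Between every consecutive pair of white vertices in a decision row we place one inverter gadget (Figure~\ref{fig:gadget-a}), each consisting of $|U|+3 = O(|U|)$ vertices, and above each of the $O(|C||U|^2)$ white-vertex positions sits a subgadget (type (b) or (c)) of constant size. Hence each row-gadget has $O(|C||U|^2)\cdot O(|U|) = O(|C||U|^3)$ vertices. Summing over all $|C|$ row-gadgets gives $O(|C|^2|U|^3)$ vertices in total, which dominates the $O(|U|^2)$ contributions of the north and south gadgets. Finally, since $G$ is planar and we add one vertex $v_F$ per face during triangulation, $|V'| = O(n)$ by Euler's formula, so the triangulation does not change the asymptotic bound.

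The main obstacle I expect is bookkeeping the number of southward paths precisely: one must argue that the type-(c) subgadgets, which are the only mechanism that creates new paths, are invoked at most once per item per set, so that no single decision row ever widens beyond $O(|C||U|^2)$ positions. Once that width bound is nailed down, everything else is a routine multiplication of (number of rows) $\times$ (width of a row) $\times$ (size of an inverter/subgadget), and the planarity argument for $V'$ is immediate.
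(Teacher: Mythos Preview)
Your proposal follows essentially the same route as the paper's proof: count the initial southward paths ($O(|U|^2)$), bound how many new paths the row-gadgets can create so that any decision row has width $O(|C||U|^2)$, multiply by $O(|U|)$ for the inverter gadgets and by $|C|$ for the number of rows, and observe that the triangulation vertices $V'$ are linear by planarity.

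There is one slip in your bookkeeping that you should fix, although it does not damage the final bound. You write that a row-gadget for $S$ ``introduces at most $|U|$ new southward paths (one per item it covers)'' and that the type-(c) subgadget is ``invoked at most once per item per set''. This is not what the construction does: for every item $x$ the north gadget launches $|U|+3$ \emph{separate} uncovered red paths, and in the row for $S$ each of those paths for $x\in S$ passes through its own type-(c) subgadget and splits. Hence a single row can create up to $O(|U|^2)$ new (covered) paths, not $O(|U|)$. The paper's argument is accordingly that the number of \emph{uncovered} paths stays at $O(|U|^2)$ throughout, so each row adds $O(|U|^2)$ paths and the total after $|C|$ rows is $O(|C||U|^2)$. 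You reach the same $O(|C||U|^2)$ only because you then pad $O(|U|^2+|C||U|)$ up to $O(|C||U|^2)$, so the conclusion survives; but the justification should be corrected.
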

\begin {proof}
The northernmost row contains $O(|U|^2)$ vertices, which are the starting points of $O (|U|^2)$ uncovered southward paths.
Then, at every row, each uncovered path may split into a covered path and an uncovered path. Covered paths do not split further. 
This means that in each row we have $O (|U|^2)$ uncovered southward
paths and that the total number of paths increases by at most $O (|U|^2)$ in each row, so in the last row the number of paths is at most $O (|C||U|^2)$.

Now, note that 
an inverter gadget consists of $O (|U|)$ vertices.
Moreover,
a row-gadget $H$ for a set being crossed by $z=O (|C||U|^2)$ southward paths
consists, for each southward path, of an inverter gadget and a constant number of further
vertices. Thus, $H$ has $O(z |U|)=O (|C||U|^3)$ vertices.
Since the north gadget and the south gadget has fewer vertices and since we
have $|C|$ row-gadgets, the total number of vertices is $n = O (|C|^2|U|^3)$, as claimed.
\end {proof}

To minimize the number of local extrema we need to assign a height to each white vertex such that
all blue vertices form one connected component and all
red vertices form one connected component.  Moreover, every connected components of yellow vertices
needs to be connected to both a
blue and a red vertex.

\begin{theorem}
The minimizing-extrema problem in an imprecise terrain with $n$ vertices cannot be approximated
within a factor of $O (\log \log n)$ in polynomial time, unless $\mathrm{P}=\mathrm{NP}$.
\end{theorem}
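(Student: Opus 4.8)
The plan is to complete the reduction from \setc begun above. Given a \setc instance $(U,C)$ we have constructed the planar terrain $G$ (north gadget, one row-gadget per set of $C$, south gadget, all faces triangulated with the height-$2$ vertices of $V'$). I will show that the minimum number of local extrema over all realizations of $G$ equals $k + O(1)$, where $k$ is the size of a minimum set cover of $(U,C)$; combined with the bound $n = O(|C|^2|U|^3)$ from the preceding lemma and the classical logarithmic-factor inapproximability of \setc, this yields the theorem.

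For the forward direction I would take a set cover $C' \subseteq C$ with $|C'| = k$ and assign heights to the white vertices row by row, north to south. In the row-gadget of a set $S \in C'$ I make the ``covering'' choice: every southward path for an item of $S$ that is still marked uncovered is routed through the west white vertex of its subgadget (Figure~\ref{fig:gadget-c}), which re-marks it covered, starts a fresh uncovered path on the east, and---as explained in the construction---forces the westernmost, third-west, \dots\ decision vertices of that row to be blue and creates exactly one local maximum, at the yellow vertex $y_S$. In the row-gadget of a set $S \notin C'$ I make the neutral choice, creating no extremum there; the inverter gadgets keep each decision row alternating blue/non-blue so that these choices are globally consistent. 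Since $C'$ covers $U$, every path that started uncovered in the north gadget has been re-marked covered by some row before reaching the south gadget, so in the south gadget every two-coloured endpoint is red. By the structural observation stated just before the theorem, this realization then has exactly one local minimum (the single blue component containing $v^{\mathrm{min}}$), one local maximum (the single red component containing $v^{\mathrm{max}}$), and one local maximum for each of the $k$ activated rows---that is, $k + 2$ local extrema.

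For the reverse direction, let $R$ be any realization of $G$. First, the height-$2$ vertices of $V'$ are never extrema because every face has both a blue and a yellow/red vertex. Let $C_R \subseteq C$ be the collection of sets $S$ whose row-gadget is ``activated'' in $R$, i.e.\ whose vertex $y_S$ is a local maximum of $R$. I would show (i) that $R$ has at least $|C_R| + O(1)$ local extrema (the $|C_R|$ distinct yellow maxima, plus the unavoidable extrema at the blue and red levels), and (ii) that if $C_R$ fails to cover $U$ then every uncovered item contributes, through its never-recovered southward path and the structural requirement above, at least one further local extremum of $R$, so one may add one set to $C_R$ per uncovered item to obtain a genuine set cover of size at most $\locmin(R) + \locmax(R) - O(1)$. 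Thus from any realization one extracts in polynomial time a set cover of size at most $\locmin(R) + \locmax(R) - O(1)$; together with the forward direction this gives $\locmin(T^*) + \locmax(T^*) = k + O(1)$.

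Finally, the hardness transfers in the usual way. Because $k$ can be taken arbitrarily large in the hard \setc instances, the additive constant is negligible, so a polynomial-time $\rho$-approximation for minimizing-extrema yields, via the extraction above, a polynomial-time $(1+o(1))\rho$-approximation for \setc. Using that \setc admits no polynomial-time $c\log|U|$-approximation for some constant $c>0$ unless $\mathrm{P}=\mathrm{NP}$, and noting $n = O(|C|^2|U|^3) = 2^{O(|U|)}$, so that $\log|U| = \Omega(\log\log n)$, we conclude that minimizing-extrema has no polynomial-time $O(\log\log n)$-approximation unless $\mathrm{P}=\mathrm{NP}$. The main obstacle is the correctness of the gadgetry: precisely tracking how the covered/uncovered status of a southward path propagates through the two kinds of subgadgets (Figures~\ref{fig:gadget-c} and~\ref{fig:gadget-b}) and the inverter gadgets, and ruling out a ``cheating'' realization that beats $k + O(1)$ extrema---for instance by assigning white vertices intermediate heights, or by breaking connectivity of the blue or red component in a way that is globally cheaper than activating rows. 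After the equivalence is in place, the vertex count, the additive-constant bookkeeping, and the reduction from \setc inapproximability are routine.
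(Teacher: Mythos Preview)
Your plan is essentially the paper's own proof: same forward construction (one local maximum at $y_S$ per chosen set, yielding $|C'|+2$ extrema), same extraction of a set cover from any realization, and the same $n = 2^{O(|U|)}$ bookkeeping combined with the Alon--Moshkovitz--Safra lower bound for \setc.

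One point where the paper is slightly sharper than your outline: in the reverse direction the paper does not try to charge each uncovered item a distinct extra extremum as in your (ii). Instead it argues by threshold. If a realization $R$ has $z \ge |U|+2$ extrema, then the trivial cover obtained by picking one set per element already has size $\le |U| = z-2$, so there is nothing to prove. If $z < |U|+2$, then \emph{every} inverter gadget must alternate blue/non-blue, since a single broken inverter produces at least $|U|+3$ extrema; this forced alternation is exactly what makes the covered/uncovered propagation rigid and guarantees that $C_R$ itself already covers $U$. Your formulation ``every uncovered item contributes at least one further local extremum'' is morally the same statement, but when you sit down to verify it you will find that the clean way to handle a cheating realization (one that breaks an inverter or assigns intermediate heights) is precisely this threshold case-split rather than a per-item charging argument. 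With that adjustment, the remaining gadget verification and the approximation-ratio transfer go through exactly as you describe.
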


\begin{proof}
We first show
that each instance $I_1$ for the \setc problem of optimal cost $z-2$ is reduced
to an instance $I_2$ for the minimizing-extrema problem of optimal cost $z$ such
that each a solution for $I_2$ of cost $y$ can be easily
transformed into a solution for $I_1$ of cost $y-2$.

Let $C'\subseteq C$ be a set cover. A coloring of the graph can be found as follows.
In each decision row,
we color the vertices alternating
in blue and non-blue (red or yellow).
We choose for a vertex $v$ between yellow and red depending
on whether $v$ is
incident (from above) to a red vertex.
If so, color $v$ red, and otherwise yellow.
Color the white vertices in the gadget of Figure~\ref{fig:gadget-c}
by the same rule, in red or yellow.
Additionally, we color the westernmost vertex in a decision row blue if and only if
the decision row is part of a gadget for a set $S\in C'$. See
Figures~\ref{fig:inSC} and~\ref{fig:outSC} for an sketch of this coloring.

\eenplaatje {inSC} {A set $S\in C'$ with a local maxima at $y_S$ allows to 
switch a red southward path from uncovered to covered.}

\eenplaatje {outSC} {A set $S\notin C'$ with no local maxima at $y_S$
forbids to 
switch a red southward path from uncovered to covered.}


In this coloring, all vertices in $V'$ are connected to both a blue vertex and either a red or
yellow vertex. Ignoring the initially yellow vertices $y_S$,
all
yellow components are connected to a blue vertex and a
red vertex; this implies that yellow vertices cannot be
minima or maxima.
Moreover, all blue vertices form one connected component.
By our choice of
the height of two-colored vertices in the south gadget,
the red vertices also induce a connected component
since a path starting from a red vertex in the north gadget
contains only red vertices and ends in the southernmost row in a covered
vertex, which is by construction red.
Apart from $v^{\mathrm{min}}$ and $v^{\mathrm{max}}$,
the only local extrema that we have are created by the yellow vertices
in the row-gadgets introduced for each
set in $C'$.
In other words, we have exactly $|C'|+2$ local extrema.

For the converse, let us first consider
the case in which we have a solution with at least $z=|U|+2$ local extrema.
Then, a set cover of size $z-2=|U|$ exists by definition.
Now, let us assume that we have a solution of cost $z<|U|+2$.
Then the vertices in each decision row are alternately colored in
blue and non-blue, and
there is a path for each item
$x$ of
the universe $U$ that is connected to $v^{\mathrm{max}}$.
Our construction then implies that there is a set $S$ in $C$ with $x\in S$ such that the row-gadget
for $S$ has a local maximum
at its yellow vertex $y_S$.
If we choose $C'$ as the collection containing all sets $\fk{S}$ of $C$ whose row-gadget contains a local maximum
at its yellow vertex $y_S$, then $C'$ is a cover of size
at most $z-2$.

Alon, Moshkovitz, and Safra~\cite{AloMS06}
showed that, for an appropriately
chosen constant $c'>0$,
there is no
polynomial-time approximation
algorithm of 
ratio $c' \ln |U|$ for the \setc problem
with universe $U$ unless $\mathrm{P}= \mathrm{NP}$.
Since each set-cover instance with an optimal solution of cost $1$
can be solved to optimality in $|U|^{O(1)}$ time, there can not exist a polynomial-time
approximation of approximation ratio $c' \ln |U|$ for the \setc
problem restricted to instances with
optimal solutions of cost at least $2$ unless $\mathrm{P}= \mathrm{NP}$.

Assume for contradiction that an approximation
algorithm exists for the minimizing-extrema problem in an imprecise terrain with $n$ vertices
with an approximation ratio $(c'/8) \ln \ln n$.
This means, each instance of the minimizing-extrema problem of
optimal cost $x'$ can be solved with cost at most $((c'/8) \ln \ln n) x'$.
By our reduction from above, each set-cover instance $I_1=(U,C)$
with optimal cost $x \ge 2$ can be first transformed into an instance
$I_2$ of the minimizing-extrema problem with $n \le |C|^3|U|^2\le
(2^{|U|})^3 |U|^2 \le 18^{|U|} $
vertices and with optimal cost $x'=x+2\le 2x$.
By our assumption, we can solve $I_2$ such that the obtained solution has
 cost at most $ ((c'/8) \ln \ln n) x'\le
((c'/4) \ln \ln n) x$,
that is, we can solve $I_1$ with cost at most
$((c'/4) \ln \ln n) x -2 \le ((c'/4)  \ln \ln 18^{|U|} ) x \le (c' \ln |U|) x
$---for the latter inequality we use the fact that the cost of an optimal solution is at
least 2, i.e., $|U|\ge 2$.
This is a contradiction to the last paragraph.
Thus,
there cannot exist an approximation
algorithm for the minimizing-extrema problem
in an imprecise terrain with $n$ vertices
with an approximation ratio
$(c'/8) \ln \ln n$.
\end{proof}

Note that Kumar, Arya, and Ramesh~\cite{KumAR00} showed that
the \setc problem with universe $U$ cannot be approximated
within a factor of $o(\log |U|)$
in random polynomial time unless $\mathrm{NP} \subseteq
\mathrm{ZTIME}(n^{O(\log \log n)})$
even if we restrict
the collections $C$ such that $|S_1 \cap S_2|\le 1$ for all $S_1\neq
S_2$ in $C$.
This restriction to $C$ means that
$|C| \le |U|^2$.
To see this, let $U = \{u_1,\ldots, u_{|U|}\}$, and consider
a subcollection $C_1$ of $C$, whose subsets all contain $u_1$.
Then it is easy to see that the number of sets in $C_1$ with
a fixed
$u'\in U\setminus \{u_1\}$ in them is at most $1$. Thus, $|C_1|\le |U|$.
Applying the same argument to all subcollections $C_2,\ldots C_{|U|}$ that have
$u_2,\ldots,u_n$, respectively, in them we conclude that $|C| \le |U|^2$.

Using the reduction from above, the graph obtained for such a restricted set cover instance
$(U,C)$ has $n'=|U|^{O(1)}$ vertices; thus,
we can conclude the following:

\begin{corollary}
The minimizing-extrema problem
in an imprecise terrain with $n$ vertices
cannot be approximated
within a factor of $o (\log n)$ in random polynomial time, unless $\mathrm{NP} \subseteq
\mathrm{ZTIME}(n^{O(\log \log n)})$.
\end{corollary}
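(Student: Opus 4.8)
The plan is to reuse, essentially verbatim, the polynomial-time reduction from \setc to minimizing-extrema built in the proof of the preceding theorem, and to observe that on the restricted family of \setc instances of Kumar, Arya and Ramesh~\cite{KumAR00} this reduction becomes size-efficient enough to transfer their stronger inapproximability bound. Recall from the discussion just above that their hardness holds for collections $C$ with $|S_1\cap S_2|\le 1$ for all $S_1\neq S_2$, and that this condition forces $|C|\le|U|^2$. First I would feed such an instance into the reduction: by the lemma bounding the size of $G$, the resulting imprecise terrain has $n=O(|C|^2|U|^3)=O(|U|^7)=|U|^{O(1)}$ vertices (the triangulation vertices of height $2$ remain non-extremal, planarity and validity of the imprecise terrain are inherited unchanged). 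Hence $\log n=\Theta(\log|U|)$, which is the whole point: here, unlike in the theorem where $n$ was only bounded by an exponential in $|U|$, $\log n$ and $\log|U|$ differ by a constant factor rather than by a $\log$.

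Next I would invoke the cost correspondence set up in the theorem's proof: a \setc instance $I_1=(U,C)$ of optimal cost $x-2$ maps to a minimizing-extrema instance $I_2$ of optimal cost $x$, and any realization of $I_2$ of cost $y$ is transformed in polynomial time into a set cover of $I_1$ of cost $y-2$; as there, we restrict to instances with $x\ge 2$, which costs nothing since \setc instances with optimum $1$ are solvable exactly in $|U|^{O(1)}$ time. Suppose, for contradiction, that minimizing-extrema had a random polynomial-time approximation of ratio $f(n)=o(\log n)$. Running it on $I_2$ yields a solution of cost at most $f(n)\,x'$, where $x'=x+2\le 2x$ is the optimum of $I_2$; subtracting $2$ gives a set cover of $I_1$ of cost at most $f(n)\,x'-2\le 2f(n)\,x = o(\log n)\cdot x = o(\log|U|)\cdot x$. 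So restricted \setc would admit an $o(\log|U|)$-approximation in random polynomial time.

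Finally I would close by quoting~\cite{KumAR00}: no such approximation exists unless $\mathrm{NP}\subseteq\mathrm{ZTIME}(n^{O(\log\log n)})$, which yields the stated conditional lower bound. I expect the only point requiring care to be the bookkeeping around the additive offset of $2$ between the two optimum values, i.e.\ checking that, once we restrict to \setc-optimum at least $2$, an $o(\log n)$ factor for extrema genuinely survives as an $o(\log|U|)$ factor for \setc after the shift and the substitution $\log n=\Theta(\log|U|)$ — exactly the computation performed in the theorem for the $\ln\ln n$ bound, but now lighter because the reduction is polynomial in $|U|$ rather than exponential, so no $\ln$-to-$\ln\ln$ loss is incurred.
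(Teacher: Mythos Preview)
Your proposal is correct and follows essentially the same approach as the paper: reuse the theorem's reduction on the restricted \setc instances of Kumar, Arya and Ramesh (where $|C|\le|U|^2$), note that the resulting terrain then has $n=|U|^{O(1)}$ vertices so that $\log n=\Theta(\log|U|)$, and transfer their $o(\log|U|)$ inapproximability bound directly. The paper's own argument is in fact terser than yours---it merely states the polynomial size bound and concludes---so your explicit bookkeeping around the additive offset of $2$ is more detail than the paper supplies (modulo a small variable-naming slip: you first let $x$ denote the extrema optimum and then silently switch to $x$ being the \setc optimum when writing $x'=x+2\le 2x$).
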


\section {Height Degeneracy} \label {sec:degeneracy}

  We have shown that removing local minima is easy and removing local extrema is hard. However, some of our results are dependent on degeneracy issues.
  In this section we will describe how these issues influence the results.

  There are two separate aspects to be considered. One is how we treat vertices of the same height in a realization of an imprecise terrain (that is, in a precise terrain). The other is whether we allow the tops and bottoms of the intervals in the imprecise terrain to have duplicate heights, that is, whether the input is assumed to be in general position.

  \subsection {Minimizing-minima is sometimes hard}

    In Section~\ref {sec:locmin}, we have shown that all local minima can be removed from an imprecise terrain in $O (n \log n)$ time.
    However, this result is based on the viewpoint that when a group of vertices all have the same height, we count them as a single minimum. This is a common viewpoint in the literature, and it is very reasonable from the application point of view. Nonetheless, we show here that if we count them as individual local minima, and the input is not in general position, the problem becomes hard.

    The reduction is from maximum independent set on planar graphs. Given an instance of maximum independent set (a planar graph), we build an imprecise terrain as follows. Each vertex of the graph is replaced by an imprecise vertex with minimum height 1 and maximum height 5. Then, each edge is replaced by $k$ vertices at fixed height 3 in the middle of the edge, which are connected to both neighboring vertices.
    Figure~\ref {fig:flat-in+flat-out} shows an example.
    Finally, we complete the triangulation by adding dummy vertices at height 5 and triangulating the resulting point set.

    \tweeplaatjes {flat-in} {flat-out} {(a) A planar graph (b) The white vertices are imprecise vertices at height $[1, 5]$, the edges have been replaced by groups of $k$ yellow vertices, which are fixed at height 3.}

    Now, a vertex becomes a local minimum if it is assigned a height of at most 3. On the other hand, a fixed vertex (an edge of the original graph) becomes a local minimum if both neighbors are assigned a height of at least 3. If $k$ is large enough, this implies that we can only make one of the two vertices incident to an edge higher than 3. The number of local minima in the final terrain is equal to the number of such vertices with a height higher than 3. These vertices form an independent set in the original graph.

    \begin{theorem}
      Minimizing the number of local minima in an imprecise terrain is NP-hard if a local minimum is defined as a vertex without any lower neighbors.
   \end{theorem}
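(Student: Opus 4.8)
The plan is to reduce from \textsc{Maximum Independent Set} on planar graphs, which is NP-hard, using exactly the construction sketched above. Given a planar graph $G=(V,E)$ with $n_G=|V|$ (and assuming, without loss of generality, that $G$ has no isolated vertices, which are trivial to handle), I build the imprecise terrain $\mathcal{T}$: one imprecise vertex with interval $[1,5]$ for each $v\in V$; for each edge $uv\in E$ a group of $k$ vertices fixed at height $3$, each adjacent to both $u$ and $v$; and finally dummy vertices fixed at height $5$ that triangulate the resulting arrangement, chosen so that every dummy vertex is incident to at least one edge-group vertex (this is easy to arrange since every face of the embedded graph is bounded by edge-group vertices). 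I set $k=n_G+1$, so the construction has $O(n_G\,|E|)$ vertices and is polynomial. The statement I aim to prove is the equivalence: $\mathcal{T}$ admits a realization with at most $t$ local minima (where a vertex with no strictly lower neighbour counts as a local minimum) if and only if $G$ has an independent set of size at least $n_G-t$; equivalently, the minimum number of local minima equals $n_G-\alpha(G)$.

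For the easy direction I would take an independent set $I$, place each imprecise vertex of $I$ at height $5$ and every other imprecise vertex at height $1$. A vertex of $I$ is incident to some edge group at height $3<5$, hence is not a local minimum; a vertex outside $I$ lies at height $1$ below all of its neighbours, hence is a local minimum; for each edge $uv$, independence forces an endpoint at height $1$, so no vertex of that group has all neighbours at height $\ge 3$ and the group contributes nothing; and every dummy vertex, being incident to an edge-group vertex at height $3$, has a lower neighbour. This realization has exactly $n_G-|I|$ local minima.

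For the converse I would first establish a normal form: in any realization, push every imprecise vertex that lies strictly above $3$ up to $5$, and every imprecise vertex at height at most $3$ down to $1$. Neither move increases the number of local minima — a vertex above $3$ is never strictly below an edge-group vertex, so raising it changes nothing (and dummies keep their height-$3$ lower neighbours), while a vertex at height $\le 3$ stays a local minimum and can only help the adjacent edge groups; crucially, this is where the degeneracy at height exactly $3$ (a plateau with an edge group, each of whose vertices now counts) is resolved. In normal form let $S$ be the set of imprecise vertices at height $5$; then the local minima are precisely the $n_G-|S|$ imprecise vertices at height $1$ together with the $k$ vertices of every edge group with both endpoints in $S$, while dummies and the vertices of $S$ are not local minima. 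Hence a realization with at most $t\le n_G<k$ local minima forces that no edge has both endpoints in $S$, i.e.\ $S$ is independent, and $n_G-|S|\le t$, giving an independent set of size at least $n_G-t$. Since $S=\emptyset$ already witnesses a realization with $n_G<k$ minima, optima correspond exactly to maximum independent sets, and NP-hardness follows.

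The main obstacle is the converse, and within it the joint use of the normal-form argument and the gap factor $k$: one must check carefully that pushing vertices to the two extreme heights never increases the count — in particular that it never turns a height-$5$ dummy vertex into a local minimum — and that choosing $k>n_G$ makes every realization in which $S$ is not independent strictly worse than the trivial all-low realization, so that the minimum is always attained by an independent $S$. Everything else is a routine verification of which vertices are (or are not) local minima in the normal-form realizations.
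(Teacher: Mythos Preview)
Your proposal is correct and follows essentially the same reduction as the paper: planar \textsc{Maximum Independent Set}, imprecise vertices with interval $[1,5]$ for graph vertices, $k$ height-$3$ vertices per edge, height-$5$ dummies to triangulate. Your write-up is in fact more careful than the paper's own argument: you make the choice $k=n_G+1$ explicit, you add the normal-form step (pushing imprecise vertices to $1$ or $5$) to make the converse direction rigorous, and you verify that dummy vertices never become local minima---details the paper leaves to the reader.
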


    Note, however, that this proof also relies on degeneracy in the input. If we assume that all input heights  are different, then the edges incident to a given vertex have different heights, meaning we can put the vertex higher than some of them but lower than others. In this situation, the algorithm from Section~\ref {sec:locmin} (with some small adaptations) can still be used to remove all local minima.

  \subsection {Minimizing-extrema is still hard when all heights are different} \label {sec:deghard}


  We can adapt the construction in Section~\ref {sec:approx} to use different heights at all vertices. The reason is that in the final solution of that problem, all paths of red vertices are routed south towards a single high vertex at the southern edge of the construction, while all paths of blue vertices are routed north towards a single low vertex at the northern edge of the construction. This means we can alter the construction, replacing all points $(x, y, z)$ by a point $(x, y, z - \varepsilon y)$. If $\varepsilon$ is small enough, this will make all points with different $y$-coordinates have different $z$-coordinates too, while not changing any property of the construction
(if  before two neighboring vertices had the same height, in the modified construction the southern one will be higher than the northern one).
  Finally, we can make the points with different $x$-coordinates have different heights as well by simply adding some random noise (even smaller than $\varepsilon$).

  \begin{theorem}
  The minimizing-extrema problem in an imprecise terrain with $n$ vertices, for terrains in general position,
  cannot be approximated in polynomial time within a factor of $O (\log \log n)$, unless $\mathrm{P}=\mathrm{NP}$.
  \end{theorem}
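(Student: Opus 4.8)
The plan is to reuse, essentially verbatim, the reduction from \setc built in Section~\ref{sec:approx}, and to show that it survives a generic perturbation into general position. We rely on the structural feature of that construction established in the proof of the preceding theorem: in the coloured solution built from a set cover $C'$, and in every solution of cost less than $|U|+2$ (the only solutions relevant to the reduction), the red vertices form a single connected component whose southward paths run monotonically south into the single red vertex $v^{\mathrm{max}}$ of the south gadget; the blue vertices form a single component whose paths run monotonically north into the single blue vertex $v^{\mathrm{min}}$ of the north gadget; every yellow component is adjacent to a blue and a red vertex, except the vertices $y_S$ with $S\in C'$, which have only blue neighbours and are local maxima; and every triangulation vertex $v_F$ of height $2$ lies in a face containing a blue neighbour and a red-or-yellow neighbour.

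First I would apply the vertical shear $(x,y,z)\mapsto(x,y,z-\varepsilon y)$ to every interval endpoint, for a parameter $\varepsilon>0$ fixed below, and then add to each vertex a noise term of magnitude at most $\varepsilon^2$, chosen so that vertices sharing a $y$-coordinate receive distinct perturbations. Because the shear fixes the $x$- and $y$-coordinates, the planar triangulation, the embedding $\varphi$, the vertex count $n=O(|C|^2|U|^3)$, and the cost relation (optimal \setc cost $x$ maps to optimal minimizing-extrema cost $x+2$) are all unchanged. If $\varepsilon$ is chosen smaller than the reciprocal of the $y$-extent of $\varphi$, then every strict height inequality between neighbours in the original construction is preserved, and any two vertices that were at equal height now differ by strictly less than $1$, with the more southern one higher; since the distinct colour heights $1,2,3,5$ are spaced by at least $1$ (and the blue, yellow and red levels by at least $2$), this is all the slack we need.

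Next I would check, colour class by colour class, that the perturbation changes no vertex's status as a local extremum. For the red class, $v^{\mathrm{max}}$ is still a local maximum, while every other red vertex has, immediately to its south on its southward path, a red neighbour that is now strictly higher, hence is not a maximum; so the red component still contributes exactly one maximum, just as the flat red plateau did. Symmetrically the blue component contributes exactly one minimum, at $v^{\mathrm{min}}$. Each yellow component still has a red neighbour strictly above and a blue neighbour strictly below---the endpoint perturbations are far below the gap of at least $2$ between colour levels---so it is neither a minimum nor a maximum, with the sole exception of the vertices $y_S$, $S\in C'$, whose only neighbours are blue and strictly lower, so they remain local maxima. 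The same estimate shows each $v_F$ still has a strictly lower blue neighbour and a strictly higher red-or-yellow neighbour, so it remains a non-extremum; and the white vertices, being coloured in every relevant solution and still free to take the colour height of any colour within their perturbed interval $[1-\varepsilon y,5-\varepsilon y]$, reduce to the previous cases. Hence the forward direction still produces a solution with exactly $|C'|+2$ extrema, and the reverse direction---which extracts a cover of size at most $z-2$ from the set of row-gadgets whose $y_S$ is a local maximum---goes through word for word, since no vertex has changed its extremum status.

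With the reduction now in general position and with the same size and cost bounds, the inequality chain of the previous theorem applies unchanged: a polynomial-time $(c'/8)\ln\ln n$-approximation for minimizing-extrema on general-position terrains would yield a polynomial-time $(c'\ln|U|)$-approximation for \setc on instances of optimal cost at least $2$, contradicting the result of Alon, Moshkovitz, and Safra~\cite{AloMS06} unless $\mathrm{P}=\mathrm{NP}$. The delicate step---and the one I expect to require real care---is the colour-by-colour verification above: one must be certain that making all heights distinct hands the solver no new freedom, in particular that the formerly monochromatic red and blue plateaus, now at distinct heights, still each collapse to a single extremum precisely because the corresponding components are monotone in $y$, and that the forcing behaviour of every gadget in Figures~\ref{fig:top}, \ref{fig:row}, \ref{fig:gadget-a}, \ref{fig:gadget-b}, \ref{fig:gadget-c} and~\ref{fig:bottom} is genuinely unaffected.
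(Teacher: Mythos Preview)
Your proposal is correct and follows essentially the same approach as the paper: apply the vertical shear $(x,y,z)\mapsto(x,y,z-\varepsilon y)$, exploit the fact that red paths are routed monotonically south toward $v^{\mathrm{max}}$ and blue paths monotonically north toward $v^{\mathrm{min}}$ so that the shear introduces no new extrema within monochromatic components, and then add sub-$\varepsilon$ noise to break remaining ties. The paper's own argument is in fact a terse sketch of exactly this, and your colour-by-colour verification spells out precisely the details the paper leaves implicit.
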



\section {Relation to Splitting Graphs}
\label{sec:intermezzo}

  In this section, we explore a relation between the problem of removing local extrema from imprecise terrains and a graph problem which we call \pdismaxconsub (or \pdmcs for short), which we will define shortly. In particular, this shows that removing local extrema is already NP-hard for a very restricted type of terrain: one where vertices can only be one of three types: \emph {low}, \emph {high}, or \emph {unknown}, the last meaning they could be either high or low.

  \subsection {\pdmcs}

  This problem is a special case of \disconsub (or \dcs for short). In that problem, one is given a graph $G = (V, E)$ and two
  disjoint subsets $R \subset V$ and $B \subset V$ of vertices that are colored red and blue. The objective is to find two
  disjoint subsets $R' \supset R$ and $B' \supset B$ such that both $(R', E)$ and $(B', E)$ are connected graphs, that is, to color some of the remaining vertices red or blue to make both the red and the blue subgraph connected.

  The \dcs\ problem has received quite some attention lately. Van 't Hof~\etal~\cite {hpw-pgcp-09} showed that \dcs\ is already NP-hard when there are only two red vertices. Paulusma and Van Rooij~\cite {pr-pgtcs-09} try to tackle the problem by designing more efficient exact algorithms.
  Kammer and Tholey~\cite {kt-cmcc-08} study a related problem called
  the {\em restricted convex coloring problem} where one can color initially uncolored vertices and where one can additionally  uncolor initially colored vertices. This allows, e.g., to handle corrupt data. The results in \cite {kt-cmcc-08} are restricted to  graphs of bounded treewidth.
  To our knowledge there are no results on the \dcs\ problem for planar graphs.

  We define \pdisconsub (or \pdcs for short) as the same problem as \dcs, except that $G$ is known to be planar. We also define \pdismaxconsub  as the optimization variant, where the goal is to optimise the number of connected components (red and blue together) in the output graph, rather than to require that both graphs are completely~connected.

  The idea of the reduction from minimizing local extrema is to take an instance of \pdmcs, and construct from it an imprecise terrain similar to the one shown in Figure~\ref {fig:spikes2}, by replacing the red vertices by stalactites and the blue vertices by stalagmites, and the white vertices by open space.
  We describe the reduction in detail in the next subsection.
  We then proceed to show that \pdmcs is NP-hard, and finally extend the proof to show that \pdcs is also NP-hard.

  \subsection {\pdmcs reduces to minimizing-extrema}
\label{sec:ReductionToOtherProblem}

    We now show that the relation between the problem of removing local extrema from imprecise terrains and the graph problem \pdmcs\ implies that minimizing the number of local extrema in an imprecise terrain is NP-hard. The idea is to take an instance of \pdmcs, and construct from it an imprecise terrain similar to the one shown in Figure~\ref {fig:spikes2}, by replacing the red vertices by stalactites and the blue vertices by stalagmites, and the white vertices by open space.

In our reduction, we take the input to \gp---a planar graph with red, blue and white vertices---and build an imprecise terrain from it.  We will first embed the graph in the plane with straight edges.
We then turn all red vertices into precise vertices at height $5$, and all blue vertices into precise vertices at height $1$.  Finally, we turn the white vertices into imprecise vertices with interval $[1,5]$.

The problem of minimizing extrema on this graph is equivalent to that of minimizing connected components after recoloring.
This is due to the fact that the only way to remove local minima in this terrain is by connecting the minima to each other by putting the white vertices at height 1.  Similarly, the only way to remove maxima is to put the white vertices at height 5.

However, to have a proper imprecise terrain, we must triangulate the graph.  To do this, we add extra vertices and edges as shown in Figure~\ref{fig:reduction-out}.  This adds a component with three extra minima and maxima per inner face of the graph.
In this way, all previous (red, blue, white) vertices are connected to new vertices at height $3$.
These cannot help the red or blue vertices to stop being extrema, and at the same time cannot be extrema because they are connected to a lower and higher vertex inside the component (at heights 2 and 4).
Given a solution that minimizes the number of extrema in the imprecise terrain that we have constructed, we can color the white vertices either red or blue.
For any white vertex whose height is set to $1$, we set the color to blue, otherwise we set the color to red.

      \tweeplaatjes {reduction-in} {reduction-out}
      {(a) An instance of \gp. (b) In the output, we fixed the red vertices at height $5$, and the blue at height $1$. The vertices with two colors are imprecise vertices with interval $[1,5]$. The rest of the vertices are added to make sure that the graph is triangulated, and that the new vertices do not interfere with the number of local extrema. Numbers indicate heights.}

%



\subsection {\gp\ is NP-hard}

We prove that \gp is NP-hard by a reduction from planar
3-SAT~\cite{l-pftu-82}.  In this problem, the normal 3-SAT problem is restricted so that the bipartite graph connecting variables and clauses is planar.  We call this graph \(G_S = ((V \cup C), E)\), where an edge \(e = (v, c) \in E\) if 
and only if variable \(v\) is in clause \(c\).  As usual in such reductions, we first embed \(G_S\) in the plane so that none of the edges in \(E\) cross.  We then replace the vertices and edges in the embedding with ``gadgets''.

The variable gadget is simply a white vertex.  We show below that coloring the vertex red is equivalent to setting the corresponding variable to \texttt{true} and coloring the vertex blue is equivalent to setting the corresponding variable to \texttt{false}.

      Another gadget that we use is the inverter gadget, shown in Figure~\ref {fig:pdocs-var}. This gadget consists of two white vertices, $k$ red vertices, and $k$ blue vertices.  Each colored vertex is connected to both white vertices.  This gadget ensures that one of the white vertices must be colored red and the other one blue, because otherwise there will be $k$ components in the output.  To ensure that this is unacceptable for any optimal solution, we make \(k\) at least as large as the number of gadgets that we use in our construction.

      \frank{We use now the name "inverter gadget" for slightly different gadgets.
              Should we explain the differences? }
\rodrigo{Well, I guess that would be nice}

      \tweeplaatjes {pdocs-var} {pdocs-var-symbol} {(a) An inverter, consisting of $k$ red and blue vertices. (b) Symbolic representation of an inverter.}

A clause gadget is a collection of three inverter gadgets,
as well as four extra red vertices.  These are all connected as shown in Figure~\ref{fig:pdocs-clause}.  The red vertices form one large component as long as at least one of the white vertices adjacent to the central red vertex is colored red.

      \eenplaatje {pdocs-clause} {A clause gadget (shown with gray background) connects three inverter gadgets using four extra red vertices.}

Finally, we create edge gadgets to connect variable gadgets to clause gadgets.  An edge gadget is simply a chain of inverter gadgets.  See Figure~\ref{fig:pdocs-chain}.  If a variable \(v\) is negated in clause \(c\), then we replace the edge \((v, c)\) with a chain of an odd number of inverter gadgets, otherwise, we use an even-length chain.  Since the number of inverter gadgets between a variable gadget and one of the clause gadgets that it is connected to determines the color of the final white vertex in the chain, we can see that coloring a variable gadget red corresponds to the final white vertex in a chain to a clause in which that vertex is not negated being colored red.  This implies that coloring a variable gadget red is equivalent to setting its value to \texttt{true}, and that coloring a variable gadget blue is equivalent to setting its value to \texttt{false}.

      \eenplaatje {pdocs-chain} {We can chain inverter gadgets. The white vertices must always be colored alternately red and blue.}

      The total number of connected components is equal to the number of white vertices in the construction, minus 2 per clause since the red components are connected, plus the number of unsatisfied clauses. Hence, minimizing the number of connected components involves determining whether the 3-SAT clause can be satisfied, which proves the following.

      \begin {theorem} \label {thm:gphard}
        \gp\ is NP-hard.
      \end {theorem}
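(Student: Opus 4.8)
The plan is to finish the reduction from \textsc{Planar 3-SAT}~\cite{l-pftu-82} begun above and to verify, in both directions, that minimising the number of connected components in the constructed graph decides satisfiability.

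\textbf{Construction.} Given a planar $3$-CNF formula with variable set $V$ and clause set $C$, fix a crossing-free straight-line embedding of $G_S$ and substitute into it the gadgets already described: a single white vertex for each variable; the clause gadget of Figure~\ref{fig:pdocs-clause} (three inverter gadgets joined by four extra red vertices) for each clause; and, for each edge $(v,c)$, a chain of inverter gadgets whose length has the parity that makes the white vertex reaching the clause gadget take the colour ``$v$ satisfies $c$'' precisely when $v$ is coloured according to its intended truth value (even length if $v$ occurs positively in $c$, odd if negatively). Set the per-inverter parameter $k$ to be larger than the total number of gadgets in the construction. Since there are $O(|V|+|C|)$ gadgets, each of size $O(k)$, $k$ can be chosen polynomially bounded, so $G$ has polynomially many vertices; and since $G$ is obtained by planar substitution into a planar embedding, $G$ is planar.

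\textbf{Correctness.} First I would record the local behaviour of the gadgets in any \emph{optimal} colouring. An inverter gadget must receive exactly one red and one blue white vertex, for otherwise its $k$ same-coloured vertices split into $k$ singleton components, exceeding the value of the solution exhibited below (this is what $k$ is for). Hence the forced alternation propagates along every chain, and an optimal solution induces a well-defined Boolean assignment on $V$. Next, in a clause gadget the four extra red vertices merge into one component together with the red vertices of its inverters exactly when at least one of its three incoming white vertices is red, i.e.\ exactly when the corresponding clause is satisfied; an unsatisfied clause contributes one extra component. A direct count (exactly as in the paragraph preceding the theorem) then gives, for every colouring respecting the alternation, a number of components equal to $W - 2|C| + u$, where $W$ is the assignment-independent number of white vertices and $u$ is the number of clauses left unsatisfied by the induced assignment. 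Conversely, from a satisfying assignment one colours variable vertices and chains accordingly, obtaining $u=0$ and hence exactly $W - 2|C|$ components. Therefore the optimum of \gp\ on $G$ equals $W-2|C|$ if and only if the formula is satisfiable, and NP-hardness of \textsc{Planar 3-SAT} yields the claim.

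\textbf{Main obstacle.} The delicate part is the clause-gadget analysis together with the component count: one must check that in \emph{every} optimal solution—not just the intended ones—the four red vertices of a clause gadget connect up exactly when a literal is satisfied, that they never become connected ``for free'' through a white vertex that the induced assignment would colour blue, and that no cheaper colouring arises by locally violating the red/blue alternation of an inverter or by exploiting the triangulating/ambient edges. The inequality $k > (\text{number of gadgets})$ is precisely what rules out such deviations, but making this airtight for each gadget type, and pinning down the additive constant $W-2|C|$ in the count, is where the real work lies; planarity and the polynomial size bound are immediate.
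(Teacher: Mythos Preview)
Your proposal is correct and follows essentially the same approach as the paper: the identical reduction from \textsc{Planar 3-SAT} with the same variable, inverter, clause, and edge-chain gadgets, the same choice of $k$, and the same component-count formula $W-2|C|+u$ (the paper phrases it as ``the number of white vertices in the construction, minus $2$ per clause since the red components are connected, plus the number of unsatisfied clauses''). Your ``main obstacle'' paragraph correctly flags the points the paper treats informally, but the paper does not supply more detail there either; the argument as you give it is at least as complete.
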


\subsection {\pdcs is NP-hard}

  We now show how the construction above can be extended to show that also the more specialized problem \pdcs is NP-hard. The main difference is that we must ensure that at the end of the construction, all blue components become connected into one large blue component, and all red components become connected into one large red component.

  First of all, we need some property of triangulated graphs.
  Let $G = (V, E)$ be a {\em planar triangulated graph}, i.e., a graph embedded in the plane such that all faces of $G$, except the outer face, are triangles.
  Moreover, let us color the vertices of $G$ red and blue such that
  the red vertices $R$ on the outer face of $G$ form more than one single
  connected component and the same is true for the blue vertices $B$.
  %
  We say that the subgraph of $G$ induced by $R$ contains a \emph {proper loop} if it contains a cycle that has at least one vertex of $V \setminus R$ inside.  This leads to the following well-known observation.  See, for example, the book by West~\cite {graphbook} for a proof.

  { \observation \label {obs:tricomp}
    If the subgraph of $G$ induced by $R$ has no proper loops, then the subgraph of $G$ induced by $V \setminus R$ is connected.
  }

  Now, let $G$ again be a planar triangulated graph, and suppose that all vertices of $G$ are colored either red or blue, so $V = R \cup B$, and suppose further that every red or blue component has at least one vertex on the outer face.
  Figure~\ref {fig:trigraph-inner} shows such a graph.
  Then obviously neither $R$ nor $B$ has a proper loop.

  This means that whenever we have such a graph with a sufficient
  number of layers of white vertices around it,
  then we can color it such that we get only one large red component and only one large blue component.

  { \lemma \label {lem:completecoloring}
    Let $G$ be as before, and let $G'$ be a larger graph that contains $G$ and has 2 extra layers of white vertices around $G$, each at least as large as the outer face of $G$.
    Then we can color the white vertices of $G'$ red or blue such that $G'$ has only one red and one blue component.
  }

  \drieplaatjes {trigraph-inner} {trigraph-red} {trigraph-final} {(a) A triangulated graph, colored such that every component has a vertex on the outer face. (b) The same graph, augmented with two extra layers of white vertices. The red components are connected into a large components without proper loops. (d)
  All the remaining white vertices can be colored blue, making the blue component also connected.}

\begin{proof}
    We know that all components have at least one vertex on the outer face. For each red component, we select exactly one such vertex and color its counterpieces on the two extra layers also red. Then we color the vertices of the outer layer red to connect all the red components into one large component, as shown in Figure~\ref {fig:trigraph-red}.

    By doing this, we cannot create any proper loops because we only took one vertex from each red component, and because of the regular structure of the two outer layers. Therefore, the red component does not have proper loops, so by Observation~\ref {obs:tricomp} the complement is connected. Hence, we can color the complement blue to obtain a valid coloring.
\end{proof}

  We now show how the construction in the previous section can be extended to show that not only \pdmcs, but also \pdcs is NP-hard. To do this, we must make a construction such that, when the SAT formula is satisfiable, all red components can be connected into one large red component, and all the blue components can be connected into one large blue component. If the formula is not satisfiable, this should not be possible.

  First, it is well known that a planar 3-SAT graph can be embedded in the plane having all variable nodes on one vertical line, and the clause nodes on both sides, as in Figure~\ref {fig:pdocs-layout}. From this, we observe that, if we replace the variable nodes and incident edges by polygonal trees, we can make the embedding such that all faces of the graph are $x$-monotone polygons (except for the outer face), as shown in Figure~\ref {fig:pdocs-monotone} for the example graph.
  This implies a partial ordering on the faces of the graph, based on their above-below relation.\footnote {In fact we don't really need them to be $x$-monotone polygons, any embedding with a directed dual graph that induces a partial ordering such that all faces at the north of the ordering are on the outside of the construction would be good enough for the argument.}

  \drieplaatjes {pdocs-layout} {pdocs-monotone} {pdocs-embedded} {(a) A layout of a planar 3-SAT instance, where the variable nodes (white) are aligned on a single vertical line, and the clause nodes (gray) are on both sides of the line. (b) Polygonal subdivision into $x$-monotone polygons as a result of replacing the variable nodes and adjacent edges by polygonal trees. (c) The construction embedded onto the subdivision. The inverter gadgets are indicated by wiggling edges. Note that there is some freedom in the construction as to how many vertices and inverter gadgets are placed on the edges; only the parity matters.}

  We now replace the polygonal trees by chains of inverter gadgets, as in the previous section. We also replace the clause nodes by clause gadgets as before, except that we do not include the three extra red vertices.  Instead, we connect the neighboring vertices
  of each clause. Figure~\ref {fig:pdocs-bareclause} shows the simplified clause gadget. Figure~\ref {fig:pdocs-embedded} shows an example of the resulting embedding.

  \eenplaatje {pdocs-bareclause} {A clause gadget, simplified.}

  The resulting construction has many white vertices on the boundaries of the $x$-monotone polygons, which end up being colored either red or blue. Furthermore, all components in a final coloring contain at least one of these white vertices, except for those in the unsatisfied clauses. So, what remains to be done is make sure that these white vertices can be connected into one large red and one large blue component, no matter how they are colored.

  To do this, consider one $x$-monotone polygon and its white vertices, as in Figure~\ref {fig:pdocs-face}. An $x$-monotone polygon has two $x$-monotone polygonal chains that both connect the westernmost point to the easternmost point. We assume that on both of these chains there are at least 2 white vertices. Furthermore, we assume that the northern chain of any polygon has at least as many white vertices as the southern chain. If any of these assumptions is not satisfied, we simply add more gadgets to the chains: adding two inverter gadgets into a chain does not change the properties of the construction, and the above/below relations of the polygons define a partial order on them so this process will end.
  We then add edges to the interior of the polygon, connecting every white vertex on the southern chain to two adjacent white vertices on the northern chain that are on both sides of an inverter gadget, as in Figure~\ref {fig:pdocs-face-connections}. This means that whatever the color of a vertex of the southern chain is, it is always connected to at least one vertex of the same color on the northern chain. By induction, this means that every white vertex in the whole construction will be connected to some vertex on the north of the construction that has the same color.
  Finally, we triangulate the polygon (or rather, the graph of the white vertices involved in the polygon) by adding arbitrary edges if necessary, see Figure~\ref {fig:pdocs-face-triang}.
  We call the resulting graph $G$.

  \drieplaatjes {pdocs-face} {pdocs-face-connections} {pdocs-face-triang} {(a) An $x$-monotone polygon with some white vertices and inverter gadgets on its boundary. (b) Each white vertex on the south has been connected to two white vertices at the north that are separated by an inverter gadget. (c) Some additional edges are inserted to make the graph formed by the white vertices triangulated.}

  To prove that the construction is indeed colorable with two components if the 3-SAT formula is satisfiable, first consider the coloring that makes all clause gadgets satisfied. Then remove the clause vertices (leaving only the three white vertices and the triangle of edges connecting them), and replace all inverter gadgets by edges. The resulting graph is triangulated, and by the above argument, each component is connected to at least one vertex on the outside.
  Conversely, if the 3-SAT formula is not satisfiable, it is not possible to color the construction with two colors such that the vertices with equal colors form two connected components because one of the clauses cannot be satisfied.  This implies that at least one of the clause gadgets cannot be properly colored.
  Finally, by Lemma~\ref {lem:completecoloring}, there also exists a corresponding graph $G'$ that can be colored in two \emph {connected} colors if and only if the 3-SAT formula is satisfyable.
  We conclude:

  { \theorem \label {thm:pdcshard}
    \pdcs is NP-hard.
  }

\section{Discussion}
  We have studied the complexity of removing local extrema from imprecise terrains. We conclude that this complexity changes dramatically between the problem of removing only one kind of extrema (either local minima or local maxima) and the problem of removing both at the same time.
  When one is interested in removing only either local minima or local maxima, this problem can be solved efficiently in $O (n \log n)$ time.
  This problem has real applications in, for example, hydrology, and
  we believe our solution is both simple and practical.
  On the other hand, removing local extrema is hard to approximate, even within a factor of $O(\log \log n)$.

  In addition, we show that even a simplified version of the problem, where vertices have only two possible heights, is already NP-hard.
  This hardness proof exploits a relation to \pdcs, which we also prove is NP-hard.
  To the best of our knowledge, this constitutes the first result for planar graphs for the popular \disconsub\ problem.

  We believe the main remaining open question is whether any constructive results for the problem of minimizing local extrema is possible.
  Though our hardnes result shows there is no hope for even an approximation algorithm with a practical approximation factor in the general case,
  it is still possible that something better can be done for special classes of imprecise terrains.
  Moreover, Equation~\ref {eq:gap} suggests that in real terrains, it may be much easier to remove local extrema than what our theoretical results claim.
  It would be interesting to run experiments on real terrains in order to analyze the size of this gap in practice, and to investigate if there is some set of  \emph{realistic} conditions on the input terrain for which the gap is small.


\section*{Acknowledgments}

{\small
We thank Jeff Phillips for proposing the problem in Section~\ref{sec:extrema}.
We thank an anonymous reviewer for useful suggestions that improved the paper.
C.G. is funded by the German Ministry for Education and Research
(BMBF) under grant number 03NAPI4 ``ADVEST''.
M.L. is funded by the U.S. Office of Naval Research under grant N00014-08-1-1015.
R.I.S. is supported by the Netherlands
Organisation for
Scientific Research (NWO).
}

\small

\bibliographystyle {abbrv}
\bibliography {refs}

\end{document}